\algrenewcommand\algorithmicwhile{\textbf{While}}
\algrenewcommand\algorithmicfor{\textbf{For}}
\algrenewcommand\algorithmicdo{\textbf{Do}}
\algrenewcommand\algorithmicif{\textbf{If}}
\algrenewcommand\algorithmicthen{\textbf{Then}}
\algrenewcommand\algorithmicelse{\textbf{Else}}
\algrenewcommand\algorithmicend{\textbf{End}}
\algrenewcommand\algorithmicreturn{\textbf{Return}}
\newtheorem{thm}{Theorem}
\newtheorem{lem}{Lemma}
\theoremstyle{definition}
\newtheorem{mydef}{Definition}
\theoremstyle{remark}
\newtheorem{example}{Example}
\newcommand{\R}{\mathbb{R}}
\def\H{\mathbb H}
\def\cj{\overline}
\newcommand{\SE}[1][3]{\mathrm{SE}(#1)}
\newcommand{\D}{\mathbb{D}}
\newcommand{\qi}{\mathbf{i}}
\newcommand{\qj}{\mathbf{j}}
\newcommand{\qk}{\mathbf{k}}
\newcommand{\eps}{\epsilon}
\newcommand{\al}{\alpha}
\newcommand{\be}{\beta}
\newcommand{\ga}{\gamma}
\newcommand{\itstep}[1]{\par\bigskip\noindent\textbf{#1:}}
\DeclareMathOperator{\GRPF}{GRPF}
\DeclareMathOperator{\comp}{comp}
\begin{document}

\begin{frontmatter}

\title{Factorization of Motion Polynomials}

\author[zijiaadd]{Zijia Li\corref{mycorrespondingauthor}}

\cortext[mycorrespondingauthor]{Corresponding author}
\ead{zijia.li@oeaw.ac.at}

\author[josefadd]{Josef Schicho}

\author[hanspeteradd]{Hans-Peter Schr\"ocker}

\address[zijiaadd]{Johann Radon Institute for Computational and Applied Mathematics (RICAM), 
Austrian Academy of Sciences, 
Altenberger Strasse 69, 
4040 Linz, Austria}
\address[josefadd]{Research Institute for Symbolic Computation, 
Johannes Kepler University, 
Altenberger Strasse 69, 
A-4040 Linz, Austria}
\address[hanspeteradd]{Unit Geometry and CAD, Faculty of Engineering Science,
University Innsbruck, Technikerstrasse 13, 6020 Innsbruck, Austria}

\begin{abstract}
  In this paper, we consider the existence of a factorization of a
  monic, bounded motion polynomial.  We prove existence of
  factorizations, possibly after multiplication with a real polynomial
  and provide algorithms for computing polynomial factor and
  factorizations. The first algorithm is conceptually simpler but may
  require a high degree of the polynomial factor.  The second
  algorithm gives an optimal degree. 
\end{abstract}

\begin{keyword}
Dual quaternion; Study Quadric; Rational Motion; Linkage
\end{keyword}

\end{frontmatter}

\section{Introduction}
\label{intr}

Let $\H[t]$ be the ring of univariate polynomials with quaternion coefficients,
with the variable $t$ commuting with the coefficients.
The existence of factorizations of quaternion polynomials into linear factors is a classical 
 result \cite{GM}. 
In \cite{part11}, motion polynomials are defined as elements of $\D\H[t]$ -- the ring of univariate 
polynomials with dual quaternions -- with real norms; these can be
used to parametrize rational motions in Euclidean 3-space. The main result there is that a factorization
into linear factors allows to construct a mechanical linkage that generates the desired motion. An adaption
of the algorithm by \cite{GM} to the dual quaternion case indeed allows to factorize ``generic'' polynomials
in $\D\H[t]$, namely those whose primal part has no strictly real factors. For fixed degree, the set
of generic motion polynomials is open and dense in the set of all motion polynomials. 

Since 2012, we have been wondering which non-generic motion polynomials do allow factorization into linear
factors. One reason for our curiousity is a paradoxical fact: rational motions that are parametrized by
generic motion polynomials have special properties, namely that their orbit curves have full cyclicity.
The question is still not completely solved, but in this paper we give an affirmative answer for ``bounded''
motion polynomials. They always admit factorizations into products of linear rotation polynomials, 
 possibly after multiplication with a real polynomial. 
 This changes the motion polynomial but not the motion it parameterizes. Bounded motion polynomials are 
 defined by the condition that the norm polynomials has no real roots. The
kinematic meaning of this condition is that the orbits are bounded curves. It is also quite obvious that
motions that can be generated by linkages with revolute joints (in particular, no translational joints) are bounded,
hence the results in this paper cover all cases for which there is a linkage with revolute joints.

The results in this paper have been influenced by our study \cite{SL} of linkages producing straight line
motions and by the factorization of planar bounded motion polynomials given in \cite{planarf}.

The paper presents two different factorization algorithms for motion polynomials (Algorithms~\ref{talg} and \ref{dtalg}). 
 Both compute a co-factor $Q \in \R[t]$ and a factorization of $QM$
 for a given bounded motion polynomial $M \in \D\H[t]$. 
 The difference between these two algorithms is that Algorithm~\ref{talg} is conceptually simpler, 
 but it is calling Algorithm~\ref{p0alg}, introduced by \cite{planarf} for the factorization of 
 planar motion polynomials. Algorithm~\ref{dtalg} is more complicated but it produces co-factors 
 of optimal degree and does not depend on Algorithm~\ref{p0alg}. It is capable of producing 
 non-planar factorizations of planar motion polynomials but can be specialized to yield planar 
 factorizations as well.

\paragraph*{Structure of the paper} The remaining part of the paper is set up as follows. In Section 2,
 we recall the basic notations of dual quaternions and motion polynomials. 
 Section 3 will focus on the factorization of a motion polynomial. We also give some concrete 
 examples to support the algorithm.

\section{Dual quaternions and motion polynomials}

We start with a brief introduction to the dual quaternion model of rigid body displacements. 
 In particular, we focus on one degree of freedom rational motions that can be parameterized 
 by motion polynomials~\cite{PT}.

The dual quaternions form an eight-dimensional associative algebra over the real numbers. This algebra is generated by the base elements
\begin{equation*}
  1,\quad
  \qi,\quad
  \qj,\quad
  \qk,\quad
  \eps,\quad
  \eps\qi,\quad
  \eps\qj,\quad
  \eps\qk
\end{equation*}
and multiplication is defined via the relations
\begin{equation*}
  \qi^2 = \qj^2 = \qk^2 = \qi\qj\qk = -1,\quad
  \eps^2 = 0,\quad
  \qi\eps = \eps\qi,\quad
  \qj\eps = \eps\qj,\quad
  \qk\eps = \eps\qk.
\end{equation*}
The set of dual quaternions is denoted by $\D\H$, the sub-algebra of quaternions $\H$ is generated by $1$, $\qi$, $\qj$, and $\qk$.  Any dual quaternion can be written as $h = p + \eps q$ with $p, q \in \H$.  The \emph{conjugate dual quaternion} is $\cj{h} = \cj{p} + \eps \cj{q}$ and conjugation of quaternions is done by multiplying the coefficients of $\qi$, $\qj$, and $\qk$ with $-1$.

The \emph{norm} of the dual quaternion $h$ is defined as $\Vert h
\Vert = h\cj{h}$. It equals $p\cj{p} + \eps(p\cj{q}+q\cj{p})$ and is a
dual number (an element of the sub-algebra $\D$ generated by $1$ and $\eps$). A dual quaternion of norm $1$ is called a \emph{unit dual quaternion.}

Dual quaternions have important applications in kinematics and mechanism science. This is due to an isomorphism between the factor group of unit dual quaternions modulo $\pm 1$ and $\SE[3]$, the group of rigid body displacements. The rigid body displacement described by $h = p + \eps q$ with $h\cj{h} = 1$ maps the point $x = x_1\qi + x_2\qj + x_3\qk$ to
\begin{equation*}
  px\cj{p} + p\cj{q} - q\cj{p} = px\cj{p} + 2p\cj{q}.
\end{equation*}

Denote by $\D\H[t]$ the ring of polynomials in $t$ with dual quaternion coefficients where multiplication is defined by the convention that the indeterminate $t$ commutes with all coefficients. We follow the convention to write the coefficients to the left of the indeterminate $t$. Similarly, we denote by $\H[t]$ the sub-ring of polynomials with coefficients in~$\H$. The \emph{conjugate polynomial} to $C = \sum_{i=0}^n c_it_i \in \D\H[t]$ is $\cj{C} = \sum_{i=0}^n \cj{c}_it^i$ and the \emph{norm polynomial} is $C\cj{C}$. Its coefficients are dual numbers.  If $C = \sum_{i=0}^n c_it^i$, the value $C(h)$ of $C$ at $h \in \D\H$ is defined as $C(h) = \sum_{i=0}^n c_ih^i$. We also define $C(\infty) := c_n$.

A polynomial $M = P + \eps Q \in \D\H[t]$ is called a \emph{motion
  polynomial} if $P\cj{Q} + Q\cj{P} = 0$ and its leading coefficient
is invertible. Usually we will even assume that the leading coefficient
is one (the polynomial is \emph{monic}). This can be accomplished by left-multiplying $M$ with the inverse of the leading coefficient and often constitutes no loss of generality. The defining conditions of a motion polynomial ensure that its norm polynomial has real coefficients.

A motion polynomial $M = P + \eps Q$ acts on a point $x = x_1\qi + x_2\qj + x_3\qk$ according to
\begin{equation}
  \label{eq:1}
  x \mapsto \frac{Px\cj{P} + 2P\cj{Q}}{P\cj{P}}.
\end{equation}
This equation defines a rigid body displacement for all values
$t \in \R \cup \{\infty\}$ that are not zeros of $P$. Any map of the
shape \eqref{eq:1} with a motion polynomial $M = P + \eps Q$ is called
a \emph{rational motion.} We also say that the motion polynomial
\emph{parameterizes} the rational motion. The motion's trajectories
(orbits of points for $t \in \R \cup \{\infty\}$) are rational
curves. It is known that any motion with only rational trajectories is
parameterized by a motion polynomial
\cite{juettler93:_rationale_bewegungsvorgaenge}.

The simplest motion polynomials are of degree one and can be written as $M = t - h$ where $h - \cj{h} \in \R$ and $h\cj{h} \in \R$. They parameterize either rotations about a fixed axis or translations in a fixed direction. We speak of the \emph{rotation} or \emph{translation quaternion} $h$ and the \emph{rotation} or \emph{translation polynomial} $t-h$, respectively. In this paper we are concerned with the factorization of motion polynomials into the product of rotation polynomials. These are distinguished from translation polynomials by having a primal part not in~$\R[t]$.

\section{Factorization}

In \cite{part11} it has been shown that a generic monic motion polynomial $M = P + \eps D$ of degree $n$ admits factorizations of the shape
\begin{equation}
  \label{eq:2}
  M = (t - h_1) \cdots (t - h_n)
\end{equation}
with rotation polynomials $t-h_1, \ldots, t - h_n$. Here, the term ``generic'' means that the primal part $P$ of $M$ has no real factors. The factorization \eqref{eq:2} can be computed by the non-deterministic Algorithm~\ref{galg}. The details of this algorithm are explained in \cite{part11} but some comments are appropriate at this place.
\begin{itemize}
\item In all our algorithms, we denote concatenation of lists by the operator symbol ``$+$''. List concatenation is not commutative: The list $L_1 + L_2$ starts with the elements of $L_1$ and ends with the elements of~$L_2$.
\item By genericity of $M$, the norm polynomial $P\cj{P}$ is real and positive. Hence, it is the product of $n$ quadratic, real factors which are irreducible over~$\R$.
\item The choice of a quadratic factor in Line~5 is arbitrary. Different choices result in different factorizations. In general, there are $n!$ factorizations of the shape \eqref{eq:2}, each corresponding to a permutation of the quadratic factors of $P\cj{P}$.
\item For left polynomials with dual quaternion coefficients in our sense, right division is possible: Given two polynomials $M$, $N \in \H[t]$ with $N$ monic, there exist unique polynomials $Q$, $R \in \H[t]$ with $M = QN + R$ and $\deg R < \deg N$.
\item The dual quaternion $h_i$ in Line~6 can be computed as zero of the linear polynomial $R_i$ obtained by writing $M = QM_i + R_i$ (polynomial division). The assumptions on $M$ guarantee existence of a unique zero over the dual quaternions but the algorithm may fail at this point if these assumptions are not met.
\item We may exit the algorithm after just one iteration to find a linear right factor of $M$, that is, write the motion polynomial as $M = M'(t-h)$. This we will often do in our factorization algorithm for non-generic motion polynomials.
\end{itemize}

\begin{algorithm}
  \caption{$\mathtt{GFactor}$ }\label{galg}
  \begin{algorithmic}[1]
    \Require $M = P + \eps D \in \D\H[t]$, a monic, generic motion polynomial of degree~$n$.
    \Ensure A list $L=\left[L_1,\ldots,L_n\right]$ such that $M=L_1 \cdots L_n$.
    \Statex
    \State $L \leftarrow \left[\right]$ \Comment (empty list)
    \State $F\leftarrow   \left[M_1,\ldots,M_n\right]$ \Comment Each $M_i \in \R[t]$, $i=1,\ldots,n$ is a\\
           \hfill quadratic, irreducible factor of $P\overline{P} \in \R[t]$.
    \Repeat
    \State Choose $M_i \in F$ and set $F \leftarrow  F - \left[M_i\right]$.
    \State Compute $h_i$ such that $M_i(h_i) = M(h_i) = 0$.
    \State $L \leftarrow \left[t-h_i\right] + L$ \Comment (add $t-h_i$ to start of list)
    \State $M \leftarrow M/(t - h_i)$ \Comment (polynomial division)
    \Until{$\deg M = 0$}.
    \State \Return $L=\left[L_1,L_2,\ldots,L_n\right]$.
  \end{algorithmic}
\end{algorithm}

For later reference, we state the result of \cite[Theorem~3]{part11} as a lemma. We do this in a form that highlights the dependence of the factorization on an ordering of the norm polynomial's quadratic factors.

\begin{lem}
  \label{lem:factorization}
  Given a generic, monic motion polynomial $M$ of degree $n$ with $M\cj{M} = M_1 \cdots M_n$ and monic, quadratic and irreducible real polynomials $M_1, \ldots, M_n$, there exist rotation quaternions $h_1,\ldots,h_n$ such that $M = (t-h_1) \cdots (t-h_n)$ and $M_i = (t-h_i)(t-\cj{h_i})$ for $i = 1,\ldots,n$. Different labeling of the quadratic factors of $M\cj{M}$ give different factorizations.
\end{lem}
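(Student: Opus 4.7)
The plan is to prove the lemma by induction on $n$, peeling off rotation polynomials from the right so that at step $k$ the linear factor $(t-h_{n-k+1})$ corresponds to the quadratic $M_{n-k+1}$ in the prescribed ordering. The base case $n = 1$ is immediate: $M = t - h_1$ and $M\overline{M} = (t - h_1)(t - \overline{h_1}) = M_1$.

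For the inductive step, I would perform right division of $M$ by the monic quadratic $M_n$, obtaining $M = Q M_n + R$ with $\deg R \leq 1$. Since $M_n \in \R[t]$ is central, conjugating gives $\overline{M} = \overline{R} + M_n \overline{Q}$, and expanding $M\overline{M}$ shows $M\overline{M} \equiv R\overline{R} \pmod{M_n}$. Because $M\overline{M} = M_1 \cdots M_n$ is divisible by $M_n$, so is $R\overline{R}$. Writing $R = at + b$: if $a = 0$ then $R\overline{R} = b\overline{b}$ is a real constant, and divisibility by a quadratic forces $R = 0$; but $R = 0$ would mean $M_n$ is a right factor of $M$, hence of the primal part $P$, contradicting genericity. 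An analogous argument applied to the primal part alone rules out the possibility that the leading coefficient $a$ has zero primal part (which would make $a$ a zero divisor in $\D\H$): the remainder of $P$ modulo $M_n$ can be neither $0$ (genericity) nor a nonzero constant $c$, since then $c\overline{c} \neq 0$ would be a constant divisible by $M_n$. Hence $a$ is invertible in $\D\H$ and $h_n := -a^{-1} b$ is well defined.

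It remains to verify that $(t-h_n)(t-\overline{h_n}) = M_n$ and that $(t-h_n)$ is a right factor of $M$. For the first claim, matching leading coefficients in $R\overline{R} = c\,M_n$ yields $c = a\overline{a}$, and a direct computation of $h_n + \overline{h_n}$ and $h_n\overline{h_n}$ from $h_n = -a^{-1}b$ recovers exactly the coefficients of $M_n$; irreducibility of $M_n$ over $\R$ forces $h_n$ to be non-real, so it is a rotation (not a translation) quaternion. For the second claim, write $M_n = (t-\overline{h_n})(t-h_n)$ to see that $(t-h_n)$ is a right factor of $QM_n$, and since $R(h_n) = a h_n + b = 0$ as well, $(t-h_n)$ divides $M = Q M_n + R$ on the right. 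Writing $M = M' (t-h_n)$, taking norms and using centrality of $M_n = (t-h_n)(t-\overline{h_n})$ gives $M'\overline{M'} = M_1 \cdots M_{n-1}$. Genericity of $M'$ follows because any real factor of its primal part, multiplied on the right by the primal part of $(t-h_n)$, would yield a real factor of $P$. Applying the induction hypothesis to $M'$ with the ordering $M_1, \ldots, M_{n-1}$ completes the factorization.

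The injectivity statement is then immediate: any factorization $M = (t-h_1) \cdots (t-h_n)$ determines the ordering uniquely through $M_i = (t-h_i)(t-\overline{h_i})$. The main obstacle I anticipate is not the inductive mechanism itself but the careful verification that $a$ is invertible in $\D\H$ (rather than merely nonzero), which requires the finer analysis of $P \bmod M_n$ outlined above; a secondary subtlety is the noncommutativity of polynomial evaluation, neatly circumvented here by the observation that the central quadratic $M_n$ has $(t-h_n)$ as a right factor regardless of the order in which its two linear factors are written.
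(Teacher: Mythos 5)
Correct, and essentially the same approach the paper relies on: the paper does not prove Lemma~\ref{lem:factorization} itself but cites \cite[Theorem~3]{part11}, and your induction---right-dividing $M$ by $M_n$, showing the remainder $R = at + b$ has leading coefficient $a$ invertible via divisibility of $R\overline{R}$ by $M_n$, extracting the linear right factor $t - h_n$ with $h_n = -a^{-1}b$, and recursing on the cofactor---is precisely the mechanism of Algorithm~\ref{galg} (\texttt{GFactor}) as described in the paper's bullet-point comments. One small imprecision: when $a = 0$, divisibility of the constant $b\overline{b}$ by the quadratic $M_n$ forces $b\overline{b} = 0$, which over $\D\H$ means only that the primal part of $b$ vanishes (so $R = \eps q$), not $R = 0$; the genericity contradiction still stands since the primal part of $M$ is then $\mathrm{primal}(Q)\,M_n$, and your subsequent, more careful paragraph treating the general case $\mathrm{primal}(a) = 0$ already closes this correctly.
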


Here are examples of non-generic motion polynomials with exceptional factorizations.

\begin{example}
  \label{ex:no-factorization}
  The motion polynomial $M:=t^2+1+\eps\qi$ is not generic. A straightforward computation shows that no linear motion polynomials $t-h_1$ and $t-h_2$ in $\D\H[t]$ with $M=(t-h_1)(t-h_2)$ exist. The motion parameterized by $M$ is a translation with constant direction.
\end{example}

\begin{example}
  \label{ex:infinitely-many-factorizations}
  Non-generic motion polynomials with infinitely many factorizations exist. One example is $M:=t^2+1-\eps t(\qi t - \qj)$. It can be factorized as $M = (t-h_1)(t-h_2)$ where
  \begin{equation*}
    h_1 = \qk - \eps(a\qi + (b - 1)\qj),\quad
    h_2 = -\qk + \eps(a\qi + b\qj)
  \end{equation*}
  and $a$, $b$ are arbitrary real numbers. The motion parameterized by $M$ is a circular translation. Any of the infinitely many factorizations of $M$ corresponds two one leg of a parallelogram linkage that can generate this motion.

\end{example}

\begin{example}
  \label{ex:prismatic-joints}
  The motion polynomial
  $M:=t^2 - (1+\qj)t + \qj - \eps ((\qi+\qk)t-2\qk)$ can be factored
  as
  \begin{equation*}
    M =(t-1-\eps\qi)(t-\qj-\eps\qk) = (t-\qj-\eps(\qi+2\qk))(t-1+\eps\qk).
  \end{equation*}
  The polynomial factors $t-1-\eps\qi$ and $t-1+\eps\qk$ parameterize, however, translations, not rotations. The reason for this is the possibility to factor the primal part of $M$ as $t^2-(1+\qj)t + \qj = (t-1)(t-\qj)$. For $t = 1$, the motion parameterization becomes singular and the trajectories pass through infinite points.
\end{example}


We will present a method to factor even the motion polynomials of these examples into products of linear rotation polynomials. This will be made possible by allowing alterations of the given motion polynomial that change its kinematic and algebraic properties in an ``admissible'' way. This alterations are:
\begin{enumerate}
\item Multiplication of $M$ with a strictly positive real polynomial $Q$ and factorization of $QM$ instead of $M$. This is an admissible change because $M$ and $QM$ parameterize the same motion. This ``multiplication trick'' has already been used in \cite{planarf} for the factorization of planar motion polynomials.
\item Substitution of a rational expression $R/Q$ with $R$, $Q \in \R[t]$ for the indeterminate $t$ in $M$ and factorization of $Q^{\deg M} M(R/Q)$ instead of $M$. This amounts to a not necessarily invertible re-parameterization of the motion. In particular, it is possible to parameterize only one part of the original motion.
\end{enumerate}

Multiplication with real polynomials does not change kinematic properties but gives additional flexibility to find factorizations in otherwise unfactorizable cases. In order to explain the meaning and necessity of substitution of real polynomials, we first give an important definition.

\begin{mydef}
  A motion polynomial $M = P + \eps D$ is called \emph{bounded,} if its primal part $P$ has no real zeros.
\end{mydef}

Generic motion polynomials are bounded. Bounded motion polynomials parameterize precisely the rational motions with only bounded trajectories. If the motion polynomial is not bounded, zeros of the primal part belong to infinite points on the trajectories. For this reason, unbounded motion polynomials can never be written as the product of linear rotation polynomials.  For example, we can never succeed in finding a factorization $(t-h_1)(t-h_2)$ with rotation quaternions $h_1$, $h_2$ of the motion polynomial in Example~\ref{ex:prismatic-joints} as it has unbounded trajectories.

Unbounded motion polynomials can always be turned into bounded ones by an appropriate substitution. This is the reason, why we henceforth restrict our attention to bounded motion polynomials. The kinematic meaning is that only a certain portion of the original trajectories is actually reached during the motion. Finally, we assume that our motion polynomials are monic. This is no loss of generality. If $M$ is bounded, the leading coefficient $c_n$ of $M$ is invertible and we may factor $c_n^{-1}M$ instead. This amounts to an admissible change of coordinates.

To summarize and give a precise problem statement: Given a bounded, monic motion polynomial $M$, we want to find a real polynomial $Q$ and a list of linear rotation polynomials $L = \left[t-h_1,\ldots,t-h_n\right]$ such that $QM = (t-h_1)\cdots(t-h_n)$. In this case we say that ``$M$ admits a factorization''. We will not only prove existence of $Q$ and $L$, we will also provide a simple algorithm for computing appropriate $Q$ and $L$, provide a bound on the degree of $Q$ (and hence also on the number of polynomials in $L$) and present a more elaborate algorithm that produces a polynomial $Q$ of minimal degree.

\subsection{Factorization of non-generic cases}

On particular case for which existence of factorizations of non-generic motion polynomials has already been proved to exist is planar kinematics \cite{planarf}.

\begin{mydef}
  A motion polynomial $M$ is called \emph{planar,} if it parameterizes a planar motion (a subgroup consisting of all rotations around axes parallel to a fixed direction and translations orthogonal to that direction).
\end{mydef}

Examples of planar motion polynomials are obtained by picking coefficients in $\langle 1, \qi, \eps\qj, \eps\qk\rangle$. In \cite{planarf}, the authors showed that for every monic, bounded, planar motion polynomial $M$ of degree $n$ a real polynomial $Q$ of degree $\deg Q \le n$ exists such that $QM$ admits a factorization of the shape \eqref{eq:2}. Input and output of this planar factorization algorithm are displayed in Algorithm~\ref{p0alg}. We list this algorithm only for the purpose of later reference. For details we refer to \cite{planarf}.

\begin{algorithm}
  \caption{\texttt{PFactor} (planar factorization algorithm of \cite{planarf})}
  \label{p0alg}
  \begin{algorithmic}[1]
    \Require $M = P + \eps D \in \D\H[t]$, a planar, bounded, monic motion polynomial.
    \Ensure $Q \in \R[t]$, list $L = \left[L_1,L_2,\ldots,L_n\right]$ of linear rotation polynomials such that $QM = L_1L_2 \cdots L_n$.
  \end{algorithmic}
\end{algorithm}

The first factorization procedure we propose is of theoretical interest. It is displayed in Algorithm~\ref{talg}. It is based on the algorithm for factorization of planar motion polynomials and produces a real polynomial $Q$ and a factorization of $QM$ for a monic and bounded but not necessarily generic motion polynomial $M$. It is conceptually simpler than Algorithm~\ref{dtalg} below but non optimal as far as minimality of $\deg Q$ is concerned. In its listing, we denote by $\GRPF(M)$ the greatest real polynomial factor of a quaternion polynomial $M \in \H[t]$. Lines~\ref{talg:ifdegT} to \ref{talg:endifdegT} of Algorithm~\ref{talg} are based on the factorization
\begin{equation*}
  MT\cj{T} = (R_1T + \eps D)T\cj{T} = (R_1T\cj{T} + \eps D\cj{T})T
\end{equation*}
of $MT\cj{T}$ into the product of a planar motion polynomial and a
polynomial $T \in \H[t]$.

\begin{algorithm}
  \caption{$\mathtt{FactorI}$}\label{talg}
  \begin{algorithmic}[1]
    \Require $M=P+\eps D \in \D\H[t]$, a monic, bounded motion polynomial with real quadratic factor in its primal part, $Q \in \R[t]$, list $L$ of linear motion polynomials. Initially, $Q = 1$ and $L = \left[\right]$ (empty list).
    \Ensure $Q$ and $L=\left[L_1,L_2,\ldots,L_n\right]$ such that $QM=L_1L_2 \cdots L_n$.
    \Statex
    \State Write $P=R_1T$ where $R_1 = \GRPF(P)$.
    \If{$\deg T \neq 0$} \label{talg:ifdegT} 
        \State $L \leftarrow L + \mathtt{GFactor}(T)$ \Comment Append linear factors of $T$ to $L$.
        \State $Q \leftarrow    T\overline{T}$, 
               $P \leftarrow    R_1T\overline{T}$, 
               $D \leftarrow    D\overline{T}$, and
               $M \leftarrow P+\eps D$ \label{talg:assign}
    \EndIf \label{talg:endifdegT}
    \State Factor $MP = (P+\eps(D_1 \qi+D_2 \qj+D_3 \qk))P
                      = (P+\eps D_1\qi)(P+\eps D_2 \qj+\eps D_3 \qk)$.
    \State $Q_1, L_1 = \mathtt{PFactor}(P+\eps D_1\qi)$ \Comment (planar factorization)
    \State $Q_2, L_2 = \mathtt{PFactor}(P + \eps D_2\qj + \eps D_3\qk)$ \Comment (planar factorization)
    \State $Q \leftarrow QQ_1Q_2 = QP^2$ \Comment (because $Q_1 = Q_2 = P$)
    \State $L \leftarrow L_2 + L_1 + L$ \Comment Concatenate lists of linear factors.
    \State \Return $Q$, $L$
  \end{algorithmic}
\end{algorithm}

Together with \cite{planarf}, Algorithm~\ref{talg} proves existence of a factorization:

\begin{thm}\label{theorem:factorization}
  Given a bounded, monic motion polynomial $M \in \D\H[t]$ there always exists a real polynomial $Q$ such that $QM$ can be written as a product of linear rotation polynomials.
\end{thm}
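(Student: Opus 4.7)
The strategy is to verify correctness of Algorithm~\ref{talg}, which reduces the factorization of a general bounded monic motion polynomial $M = P + \eps D$ to one application of the generic algorithm $\mathtt{GFactor}$ and two applications of the planar factorization algorithm $\mathtt{PFactor}$. The argument proceeds in three stages: extracting the ``non-real'' content of $P$, making the primal part real by right-multiplication with a quaternion conjugate, and splitting the resulting pure-quaternion dual part into two pieces that are each planar.

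First I would write $P = R_1 T$ with $R_1 = \GRPF(P) \in \R[t]$. By construction $T \in \H[t]$ has no real factor, so if $\deg T > 0$ then Lemma~\ref{lem:factorization} applies to $T$ and $\mathtt{GFactor}(T)$ produces a factorization of $T$ into linear rotation polynomials that will become the rightmost entries of the output list. Since $T\bar{T} \in \R[t]$ is central in $\H[t]$, the identity
\[
M T\bar{T} = (R_1 T + \eps D)\,T\bar{T} = (R_1 T\bar{T} + \eps D\bar{T})\,T
\]
reduces the remaining task to factoring $M' := R_1 T\bar{T} + \eps D\bar{T}$, whose primal part $P' := R_1 T\bar{T}$ is real. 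One checks that $M'$ is itself a motion polynomial and that, because $P'$ is real and central, the motion polynomial condition forces the dual part $D' := D\bar{T}$ to be pure-quaternion, so it can be written $D' = D_1\qi + D_2\qj + D_3\qk$ with $D_i \in \R[t]$. A direct computation using centrality of $P'$ then yields
\[
P' M' = (P' + \eps D_1\qi)(P' + \eps D_2\qj + \eps D_3\qk).
\]

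Each of the two right-hand factors is a monic motion polynomial with real primal $P'$; each parametrizes a pure translation --- along $\qi$ in the first case, in the $\qj\qk$-plane in the second --- so each is a planar motion polynomial. Since $M$ is bounded, $P$ and hence $R_1$ has no real zeros, and $T\bar{T}$ has no real zeros either since such zeros would correspond to real factors of $T$; hence $P'$ has no real zeros and both planar factors are bounded, so the hypotheses of $\mathtt{PFactor}$ are met. That algorithm, proved correct in \cite{planarf}, supplies real polynomials $Q_1, Q_2$ and factorizations of $Q_1(P'+\eps D_1\qi)$ and $Q_2(P'+\eps D_2\qj+\eps D_3\qk)$ as products of linear rotation polynomials. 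Setting $Q := Q_1 Q_2 P' T\bar{T} \in \R[t]$ and using that real polynomials are central in $\D\H[t]$, the two displayed identities combine into
\[
QM = Q_1(P' + \eps D_1\qi)\cdot Q_2(P' + \eps D_2\qj + \eps D_3\qk)\cdot T,
\]
whose right-hand side is a product of linear rotation polynomials. The main conceptual content is the reduction itself: after making the primal real, the dual part decomposes orthogonally into a one-dimensional $\qi$-component and a two-dimensional $\qj\qk$-component, each bounded and planar, so the general bounded case reduces to the already-established planar factorization theorem of \cite{planarf} plus one generic factorization.
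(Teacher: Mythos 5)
Your proof is correct and takes essentially the same route as the paper: the paper's proof of Theorem~\ref{theorem:factorization} consists of pointing to Algorithm~\ref{talg} plus the planar result of \cite{planarf}, and your argument is exactly a detailed verification of that algorithm, using the same key identities $MT\cj{T} = (R_1T\cj{T}+\eps D\cj{T})T$ and $P'M' = (P'+\eps D_1\qi)(P'+\eps D_2\qj+\eps D_3\qk)$. You also supply the checks the paper leaves implicit (that $D\cj{T}$ is vectorial, that both planar factors are bounded, and the exact cofactor $Q$), so the underlying reduction to one $\mathtt{GFactor}$ call plus two $\mathtt{PFactor}$ calls is the paper's own.
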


\subsection{Factorizations of minimal degree}

Now we should further elaborate on the minimal possible degree of the real factor 
 $Q$ that makes factorization possible. In the planar case, Algorithm~\ref{p0alg} 
 gives the bound $\deg Q \le \deg M$ and this bound is known to be optimal \cite{planarf}. 
 The upper bound achievable with Algorithm~\ref{talg} is worse. 
 Let $m = \deg M$ and $r = \deg R_1$.  
 Then, the degree of $Q$ in Line~\ref{talg:assign} is bounded by 
 $2(m-r)$ and the degree of $P$ in Line~4 is bounded by $r+2(m-r)=2m-r$. 
 Hence, the degree of $Q$ at the end of Algorithm~\ref{talg} is bounded by 
 $2(m-r)+2(2m-r) = 6m-4r$. Because of $r \ge 2$, this gives the bound $\deg Q \le 6m-8$. 
 However, also in the spatial case the bound $\deg Q \le \deg M$ holds true. 
 This is guaranteed by Algorithm~\ref{dtalg}.

\begin{algorithm}
  \caption{$\mathtt{FactorAll}$}\label{dtalg}
  \begin{algorithmic}[1]
    \Require $M=P+\eps D \in \D\H[t]$, a monic, bounded motion
    polynomial of complexity $(\al,\be,\ga)$, $Q \in \R[t]$, lists $L_l$, $L_r$ of linear motion
    polynomials. Initially, $Q = 1$, $L_l = \left[\right]$, $L_r = \left[\right]$.
    \Ensure $Q$, $L_l$, $L_r$ such that with $L_l + L_r =
    \left[L_1,L_2,\ldots,L_n\right]$ we have $QM=L_1L_2\cdots L_n$.
    \Statex
    \If{$P$ has no real factors}
        \State \Return $Q$, $L_l$, $L_r + \mathtt{GFactor}(M)$. \label{dtalg:gfactor}
    \EndIf
    \State Let $R_1$ be the $\GRPF$ of $P$, i.e., $P=R_1T$, $\deg P = \beta$.
    \State Let $\al := \deg(\gcd(P,\cj{P},D\cj{D}))=\deg(\gcd(R_1,D\cj{D}))$. \Comment $\comp(M)=(\al,\be,\ga)$.
    \If{$\gcd(R_1,D\cj{D})=1$ ($\al=0$)}
        \If{$\gcd(R_1,T\cj{T})=1$}
            \If{$T = 1$, i.e., $P$ is real}
                \State Let $P_1$ be a quadratic real divisor of $P$, i.e., $P = P_1P'$. \label{dtalg:qrdP1Pp1}
                \State Compute quaternion roots $h_r$, $h_l$ of $P_1$ such that \label{dtalg:hrhl1}
                \State $h_l \neq \cj{h_r}$, $D(t-\cj{h_r}) = (t-h_l)D'$, \label{dtalg:hrhl}\Comment (Lemma~\ref{lem:factorization}, Lemma~\ref{fliplem})
                \State $(t-h_l)D'(t-h_r) = DP_1$. \label{dtalg:hrhl2}
                \State $Q \leftarrow QP_1$, $L_l \leftarrow L_l + \left[ t - h_l \right]$, 
                     $L_r \leftarrow \left[ t - h_r \right] + L_r$, \label{dtalg:hrhl2.5}
                \State $M' \leftarrow P'(t-\cj{h_l})(t-\cj{h_r})+\eps D'$.  \Comment $\comp(M')=(0,\be-2,\ga)$.
                \State \Return $\mathtt{FactorAll}(M', Q, L_l, L_r)$ \label{dtalg:qrdP1Pp2}
            \Else
                \State Let $P_1$ be a quadratic real divisor of $T\cj{T}$. \label{dtalg:qrdTT1}
                \State Compute a common zero $h$ of $P_1$ and $M$ such that
                \State $P_1 = (t-\cj{h})(t-h)$, $M = M'(t-h)$. \Comment $\comp(M')=(0,\be,\ga-1)$.
                \State \Return $\mathtt{FactorAll}(M', Q, L_l, \left[ t - h \right] + L_r)$ \label{dtalg:qrdTT2}
            \EndIf
        \Else
            \State Let $P_1$ be a quadratic real divisor of $\gcd(R_1, T\cj{T})$, i.e., $P = P'P_1$. \label{dtalg:qrdR1TT1}
            \State Compute quaternions roots $h_r$, $h_l$ of $P_1$ such that \label{dtalg:hrhl3}
            \State $P_1(h_r) = 0$, $T(h_r) \neq 0$, $T(h_l) \neq 0$, \Comment (Lemma~\ref{lem:factorization}, Lemma~\ref{fliplem})
            \State $DP_1 = D(t-\cj{h_r})(t-h_r) = (t-h_l)D'(t-h_r)$.  \label{dtalg:hrhl4} \Comment (Lemma~\ref{lem:factorization}, Lemma~\ref{fliplem})
            \State $Q \leftarrow QP_1$, $L_l \leftarrow L_l + \left[ t - h_l \right]$, 
                   $L_r \leftarrow \left[ t - h_r \right] + L_r$, \label{dtalg:hrhl4.5}
            \State $M' \leftarrow (t-\cj{h_l})P'(t-\cj{h_r}) + \eps D'$.  \Comment $\comp(M')=(0,\be-2,\ga)$.
            \State \Return $\mathtt{FactorAll}(M', Q, L_l, L_r)$ \label{dtalg:qrdR1TT2}
        \EndIf
    \Else \quad($\al\geq 2$)
        \State Let $P_1$ be a quadratic real divisor of $\gcd(R_1, D\cj{D})$. \label{dtalg:qrdR1DD}
        \State Compute quaternion roots $h_r$, $h_l$ of $P_1$ such that \label{dtalg:hrhl5} \Comment (Lemma~\ref{lem:factorization})
        \State $D = (t - h_l) D_l = D_r (t - h_r)$ and
         $P = (t - h_l) P_l = P_r (t - h_r)$ \label{dtalg:hrhl6} \Comment (Lemma~\ref{lem:factorization})
        \If{$\deg\GRPF(P_l) \le \deg\GRPF(P_r)$}
        \State $L_l \leftarrow L_l + \left[ t - h_l \right],$
         $M' \leftarrow P_l + \eps D_l$.  \Comment $\comp(M')=(\al-2,\be-2,\ga-1)$.
        \State \Return $\mathtt{FactorAll}(M', Q, L_l, L_r)$
        \Else
        \State $L_r \leftarrow \left[ t - h_r \right] + L_r$
        $M' \leftarrow P_r + \eps D_r$.  \Comment $\comp(M')=(\al-2,\be,\ga-1)$\\\hfill or $\comp(M')=(\al-2,\be-2,\ga-1)$.
        \State \Return $\mathtt{FactorAll}(M', Q, L_l, L_r)$ \label{dtalg:qrdR1DD2}
        \EndIf
    \EndIf
  \end{algorithmic}	 	
\end{algorithm}

Here are a few remarks on Algorithm~\ref{dtalg}.
  
\begin{itemize}

\item In Algorithm~\ref{dtalg}, we mainly treat the case where the primal part $P$ of the motion polynomial $M = P + \eps D$ has a non-constant real factor $R_1 = \GRPF(P)$. Otherwise, we just resort to factorization of generic motion polynomials (Algorithm~\ref{galg}).

\item The \emph{complexity} of 
a monic bounded motion polynomial 
 $M = P + \eps D \in \D\H[t]$ in Algorithm~\ref{dtalg} is a triple of integers 
 \begin{align*}
   \comp(M)  &:= (\al,\be,\ga),\\
          \al &:= \deg(\gcd(P,\cj{P},D\cj{D})),\\
          \be &:= \deg(\gcd(P,\cj{P})),\\
          \ga &:= \deg(P),
  \end{align*}
  where $\deg(a)$ is the degree of the polynomial $a$ and $\gcd(a,b) \in \R[t]$ is the greatest real common factor of polynomials $a$ and $b$. With this definition, $\gcd(a, \cj{a})$ is the greatest real polynomial factor of $a$. In each step of the recursive Algorithm~\ref{dtalg}, we try to construct $M'$ such that $\comp(M')<\comp(M)$ with lexicographic order, e.g., $(4,2,5)<(4,4,3), (4,2,2)<(4,2,3)$. Then we recursively call 
\texttt{FactorAll} with $M'$ as argument. As soon as $\al=\be=\ga=0$, Algorithm~\ref{dtalg} terminates.
 
\item The computation of quaternions $h_l$ and $h_r$ in Lines~\ref{dtalg:hrhl5}--\ref{dtalg:hrhl6} is based on Lemma~\ref{lem:factorization} and \cite[Theorem~3.2]{huang02}. One of this theorem's statements is that the set of quaternion roots of the irreducible quadratic polynomial $Q = t^2 + bt + c \in \R[t]$ is
  \begin{equation}
    \label{eq:3}
    \Bigl\{ \frac{1}{2}\bigr(-b + \sqrt{4c-b^2}(x_1\qi + x_2\qj + x_3\qk)\bigl) \mid (x_1,x_1,x_3) \in S^2 \Bigr\}
  \end{equation}
  where $S^2$ is the unit $2$-sphere in $\R^3$. In particular, for every unit vector $(x_1,x_2,x_3) \in S^2$, there is a the quaternion root $q$ whose vector part is proportional to $x_1\qi + x_2\qj + x_3\qk$. Also note that $Q = (t-h)(t-\cj{h})$ if $h$ is a quaternion root of~$Q$. In the algorithm, we can pick an arbitrary zero $h_r$ of $P_1$ and compute $D_r$ by polynomial division. Then we compute $\cj{h_l}$ as zero of the remainder polynomial $\tilde{R}$ in the division $\cj{D} = \tilde{Q}\tilde{M} + \tilde{R}$ with $\tilde{M} = (t-h_r)(t-\cj{h_r})$, as in one iteration of Algorithm~\ref{galg} and $\cj{D_l}$ again by polynomial division.

\item The computation of quaternions $h_l$ and $h_r$ in Lines~\ref{dtalg:hrhl1}--\ref{dtalg:hrhl2} 
 and Lines~\ref{dtalg:hrhl3}--\ref{dtalg:hrhl4} of Algorithm~\ref{dtalg} is again based 
 on Lemma~\ref{lem:factorization} but also on Lemma~\ref{fliplem} below. 
 Consider, for example, the situation in Lines~\ref{dtalg:hrhl1}--\ref{dtalg:hrhl2}. 
 We may prescribe $h_r$ arbitrarily as a root of $P_1$, see \eqref{eq:3}. 
 Then we use polynomial division (over $\D\H$) to find $\tilde{Q},\tilde{R} \in \H[t]$ 
 such that $(t-h_r)\cj{D} = \tilde{Q}P_1 + \tilde{R}$ and compute $\cj{h_l}$ 
 as unique zero of the linear remainder polynomial $\tilde{R}$. Using polynomial 
 division once more, we then find $\cj{D'}$ such that $(t-h_r)\cj{D} = \cj{D'}(t - \cj{h_l})$.

\end{itemize}

\begin{lem}\label{fliplem}
  Let $Q \in \R[t]$ be a quadratic polynomial that is irreducible over 
  $\R$, $D \in \H[t]$ a polynomial with $\gcd(D\cj{D}, Q) = 1$ and $O$ the 
  set of quaternion roots of $Q$. Then the map $f_{Q,D}\colon O \to O$, $h_l \mapsto h_r$ 
  with $h_r$ being the common root of $(t-h_l)D$ and $Q$ is a well-defined bijection. 
  Moreover, $f_{Q,D}(h) \neq h$ for all $h \in O$.
\end{lem}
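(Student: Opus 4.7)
My plan is to prove the three assertions—well-definedness, bijectivity, and absence of fixed points—in sequence, with the first two resting on a short polynomial-division argument and the third being the delicate part.

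\textbf{Step 1: well-definedness.} I would begin by right-dividing $(t-h_l)D$ by the central polynomial $Q$, obtaining $(t-h_l)D = AQ + R$ with $\deg R\le 1$. Taking norm polynomials on both sides and using the identity $(t-h_l)(t-\cj{h_l})=Q$, valid for every $h_l\in O$, I get
\begin{equation*}
  Q\cdot D\cj{D} \;=\; Q^{2}A\cj{A} \;+\; Q\,(A\cj{R}+R\cj{A}) \;+\; R\cj{R},
\end{equation*}
so $Q$ divides the real polynomial $R\cj{R}$ of degree at most $2$. Either $R\cj{R}$ is identically zero—which forces $R=0$, hence $Q\mid D\cj{D}$, contradicting the coprimality hypothesis—or $R\cj{R}=c\,Q$ for some real $c>0$. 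Writing $R=a(t-h_r)$ with $a\neq 0$ gives $a\cj{a}(t-h_r)(t-\cj{h_r})=cQ$, from which $(t-h_r)(t-\cj{h_r})=Q$ and $h_r\in O$. Any common root of $(t-h_l)D$ and $Q$ must also be a root of $R$, and the linear polynomial $R$ has a unique zero, so $h_r$ is uniquely determined and $f_{Q,D}$ is well-defined.

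\textbf{Step 2: bijectivity.} A direct expansion of the product in $\H[t]$ yields the evaluation formula $\bigl((t-h_l)D\bigr)(h) = D(h)\,h - h_l\,D(h)$, so $(t-h_l)D$ vanishes at $h_r$ if and only if $h_l = D(h_r)\,h_r\,D(h_r)^{-1}$, provided $D(h_r)\neq 0$. The coprimality assumption is exactly what guarantees $D(h_r)\neq 0$ for every $h_r\in O$, since a zero of $D$ at such $h_r$ would extract a central factor $(t-h_r)(t-\cj{h_r})=Q$ from $D\cj{D}$. This pins down the inverse of $f_{Q,D}$ as $g\colon h_r\mapsto D(h_r)\,h_r\,D(h_r)^{-1}$. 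The image lies in $O$ because inner automorphisms of $\H$ preserve both the real part and the norm of a quaternion, and these two invariants characterise membership in~$O$. The identities $f_{Q,D}\circ g = g\circ f_{Q,D} = \mathrm{id}_O$ then follow immediately from the characterising relation.

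\textbf{Step 3: no fixed points (the main obstacle).} A fixed point $h=h_l=h_r$ is characterised, via the same evaluation formula, by $D(h)\,h=h\,D(h)$, that is, $D(h)$ belongs to the centraliser of $h$ in $\H$; since $h\in O$ has nonzero vector part, this centraliser is the two-dimensional real subalgebra $\R[h]$. The goal is to derive a contradiction with $\gcd(D\cj{D},Q)=1$. The approach I would attempt is to exploit the isomorphism $\R[h]\cong\C$ that sends $h$ to a complex root $\alpha$ of $Q$, and argue that under the assumption $D(h)\in\R[h]$ the evaluation of the real polynomial $D\cj{D}\in\R[t]$ at $h$ collapses to the product $D(h)\cj{D(h)}$, which, via a further refinement, must vanish; this would identify $\alpha$ as a complex root of $D\cj{D}$ and hence force $Q\mid D\cj{D}$. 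This is the step I expect to require the most care, because evaluation in $\H[t]$ does not in general respect multiplication and the hypothesis $D(h)\in\R[h]$ must be leveraged precisely to restore enough multiplicativity—together, presumably, with the companion Lemma~\ref{lem:factorization}—for the contradiction to close.
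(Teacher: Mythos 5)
Your Steps 1 and 2 are correct. Step 1 is the same argument the paper uses, only spelled out in more detail: $Q$ is a quadratic factor of the norm polynomial $Q\,D\cj{D}$ of $(t-h_l)D$, and uniqueness comes from $\gcd(D\cj{D},Q)=1$. Step 2, however, takes a genuinely different route from the paper. The paper normalises $Q$ to $t^2+1$ by an affine substitution, writes $D = K(t^2+1)+at+b$, computes the linear remainder of $(t-h_l)D$ modulo $t^2+1$ by hand, and establishes injectivity by subtracting the two remainder equations obtained from $h_l$ and $h'_l$ (then gets surjectivity by running the same argument for $\cj{D}$). You instead read the formula $\bigl((t-h_l)D\bigr)(h_r)=D(h_r)h_r-h_lD(h_r)$ directly off the evaluation rule and exhibit the explicit inverse $g\colon h_r\mapsto D(h_r)\,h_r\,D(h_r)^{-1}$, using that inner automorphisms preserve trace and norm and hence preserve $O$. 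This is shorter, coordinate-free, avoids the normalisation, and delivers the inverse map rather than injectivity and surjectivity separately.

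Step 3 contains a genuine gap, and it is not one that more care will close: the statement ``$f_{Q,D}(h)\neq h$ for all $h\in O$'' fails under the hypotheses as written. You correctly reduce a fixed point to the condition $D(h)\in\R[h]$, but this does \emph{not} force $D(h)\cj{D(h)}=0$, and no invocation of Lemma~\ref{lem:factorization} will make it do so. Concretely, take $Q=t^2+1$ and $D=\qi$ (a constant, purely vectorial polynomial). Then $D\cj{D}=1$, so $\gcd(D\cj{D},Q)=1$, yet $(t-\qi)\qi=\qi t+1$ vanishes at $t=\qi\in O$, so $f_{Q,D}(\qi)=\qi$ is a fixed point while $D(\qi)\cj{D(\qi)}=1\neq 0$. (The paper's own proof of this part has the same defect: from $f_{Q,D}(h)=h$ it asserts a factorisation $D=(t-\cj{h})D'(t-h)$, which the example $D=\qi$ refutes already by degree count.) What the algorithm actually requires is only that $f_{Q,D}$ be different from the identity on $O$, so that some non-fixed $h$ can be chosen; under the extra structure present in Algorithm~\ref{dtalg} (there $P$ is real, so $P\cj{D}+D\cj{P}=0$ forces $D$ to have purely vectorial coefficients) one can show the remainder of $D$ modulo $Q$ cannot have both coefficients in $\R$, which rules out $f_{Q,D}=\mathrm{id}$. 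Aiming at that weaker, correct claim is the way to salvage the argument; the universally quantified version is simply too strong.
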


\begin{proof}
  Our proof is based on results of \cite{part11} that state that the quaternion roots of a polynomial $P \in \H[t]$ are also roots of the quadratic factors of $P\cj{P}$. Moreover, $h$ is a root of $P$ if and only if $t-h$ is a right factor of $P$ \cite[Lemma~2]{part11}.

  By \eqref{eq:3}, the set $O$ is not empty.  The norm polynomial of $(t-h_l)D$ has the quadratic factor $Q$. Hence, there exists a quaternion root $h_r \in O$ of $(t-h_l)D$. This root is unique because of $\gcd(D\cj{D}, Q) = 1$ and the map $f_{Q,D}$ is well-defined.

  If $f_{Q,D}(h) = h$ for some $h \in O$, there exists $D' \in \H[t]$ with $D = (t-\cj{h})D'(t-h)$ and we get a contradiction to $\gcd(D\cj{D},Q) = 1$:
  \begin{equation*}
    D\cj{D} = (t-\cj{h})D'(t-h)(t-\cj{h})\cj{D'}(t-h) = Q(t-\cj{h})D'\cj{D'}(t-h).
  \end{equation*}

  By a linear parameter transformation $t \mapsto at+b$ with $a,b \in \R$ we can always achieve that $Q$ is a real multiple of $t^2 + 1$. Hence, it is no loss of generality to assume $Q = t^2 + 1$ when proving bijectivity of $f_{Q,D}$. Using polynomial division we find $K \in \H[t]$ and $a,b\in\R$ with $D= K(t^2+1)+at+b$. Then we have
  \begin{align*}
    (t-h_l)D&=(t-h_l)K(t^2+1)+(t-h_l)(at+b)\\
            &=((t-h_l)K+a)(t^2+1)+(b-h_la)t-a-h_1b.
  \end{align*}
  As already argued, there is $h_r = f_D(h_l) \in O$ such that
  \begin{equation}\label{eq:4}
    (b-h_la)h_r-a-h_lb=0.
  \end{equation}
  If there is $h'_l \neq h_l$ with $f_D(h'_l) = h_r$ then we also have
  \begin{equation}\label{eq:5}
    (b-h'_la)h_r-a-h'_lb=0.
  \end{equation}
  Subtracting Equations~\ref{eq:4} and \ref{eq:5} yields
  \begin{equation}\label{eq:6}
    (h'_l - h_l)ah_r + (h'_l - h_l)b = 0.
  \end{equation}
  As $h'_l - h_l \neq 0$, we have $ah_r + b = 0$ and this implies $D = K(t^2+1)+a(t-h_r)$. But then $\deg\gcd(D\cj{D}, t^2+1) > 0$ would contradict our assumptions. Hence $f_D$ is injective. To prove surjectivity, observe that for any $h_r \in O$, there is $h_l$ such that $(t-h_r)\overline{D}=\overline{D'}(t-h_l)$ by injectivity of $f_{\cj{D}}$. But then we have $f_D(h_l) = h_r$.
\end{proof}

The termination of Algorithm~\ref{dtalg} is guaranteed by the following theorem.
\begin{thm}\label{theorem:termination}
  Algorithm~\ref{dtalg} terminates.
\end{thm}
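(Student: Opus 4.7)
The plan is to establish termination by showing that the complexity triple $\comp(M) = (\al,\be,\ga) \in \mathbb{N}^3$ strictly decreases in the lexicographic order under every recursive call of Algorithm~\ref{dtalg}, and then to invoke well-foundedness of this order on $\mathbb{N}^3$.

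First I would identify the base case. Termination happens at Line~\ref{dtalg:gfactor}, when $P$ has no real factors, since at that point Algorithm~\ref{galg} is invoked and no further recursion takes place. Every other branch of the algorithm ends with a tail call \texttt{FactorAll}$(M',\ldots)$, and each such branch carries an annotation recording the value of $\comp(M')$.

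The main work is to verify these annotations branch by branch. In Lines~\ref{dtalg:qrdP1Pp1}--\ref{dtalg:qrdP1Pp2} the primal $P$ is real and a quadratic real divisor $P_1$ is replaced by $(t-\cj{h_l})(t-\cj{h_r})$; the condition $h_l \neq \cj{h_r}$ supplied by Lemma~\ref{fliplem} forces the two new non-real factors to contribute nothing to the greatest common real factor of primal and conjugate primal, so $\al$ and $\ga$ are preserved while $\be$ drops by two. In Lines~\ref{dtalg:qrdTT1}--\ref{dtalg:qrdTT2} the motion polynomial is right-divided by $(t-h)$ with $h$ a root of a quadratic factor of $T\cj{T}$ coprime to $R_1$, so $\ga$ drops by one while the other two components are unaffected. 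In Lines~\ref{dtalg:qrdR1TT1}--\ref{dtalg:qrdR1TT2} a quadratic real divisor of $\gcd(R_1,T\cj{T})$ is peeled off using Lemma~\ref{fliplem}, so $\be$ drops by two with $\al=0$ maintained. In the two remaining branches, where $\al\ge 2$, the leading component itself drops by two. Each of these claims reduces to degree arithmetic, using Lemma~\ref{lem:factorization} and Lemma~\ref{fliplem} to guarantee that the polynomial divisions produce factors of the claimed shape.

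The subcase I expect to require the most care is Lines~\ref{dtalg:qrdTT1}--\ref{dtalg:qrdTT2}, where only $\ga$ decreases: one must check that right-dividing by $(t-h)$ does not inadvertently increase $\be$, which would contradict the annotation. Here the hypothesis $\gcd(R_1,T\cj{T})=1$, together with the choice of $h$ as a quaternion root of a quadratic factor $P_1 \mid T\cj{T}$, guarantees that the greatest real factor of the new primal part is still $R_1$, so $\be$ is indeed preserved. Once all branches are checked, termination follows immediately: the lexicographic order on $\mathbb{N}^3$ is a well-order, so no infinite strictly decreasing sequence of complexity triples is possible and the algorithm must halt after finitely many recursive calls.
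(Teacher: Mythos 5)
Your proposal is correct and follows essentially the same route as the paper: both arguments rest on showing that $\comp(M')$ strictly decreases in lexicographic order at every recursive call (reading off the annotations) and then concluding termination. Your phrasing is slightly cleaner in that you invoke well-foundedness of the lexicographic order on $\mathbb{N}^3$ explicitly, whereas the paper instead adds the equivalent observation that Lines~\ref{dtalg:qrdTT1}--\ref{dtalg:qrdTT2}, which only lower $\ga$, cannot repeat indefinitely because $\be\le\ga$, so $\al$ and $\be$ reach zero after finitely many steps.
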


\begin{proof}
  The termination of the Algorithm~\ref{dtalg} is based on the reduction of the complexity 
  $\comp(M)$.  As one can see from the comments in the Algorithm~\ref{dtalg}, 
  after each recursive step $\comp(M')$ of the new motion polynomial 
  $M'$ strictly decreases. Furthermore, Lines~\ref{dtalg:qrdTT1}--\ref{dtalg:qrdTT2} 
  can not happen continually because of $\be\leq\ga$ in each motion polynomial. Then in finitely many steps we can reduce $\al$ and $\be$ to zero. After this the algorithm will terminate in one step using Algorithm~\ref{galg}.
\end{proof}

\subsection{A comprehensive example}
\label{sec:example}

Now we illustrate Algorithm~\ref{dtalg} by a comprehensive example where we really enter each 
 sub-branch once. We wish to factor the motion polynomial $M = P + \eps D$ where
\begin{equation}
  \label{eq:7}
  \begin{aligned}
    P &= (t^2+2t+2)(t^2+1)^2,\\
    D &= -(t^2+2t+2)\qi+(t^5+t^4+2t^3+t^2-t-1)\qj+(t^4+t^2-2t-1)\qk.
  \end{aligned}
\end{equation}

\itstep{First iteration} The input to Algorithm~\ref{dtalg} is $M^{(1)} = P^{(1)} + \eps Q^{(1)}$ where $P^{(1)} = P$ and $D^{(1)} = D$ from \eqref{eq:7}. We compute
\begin{equation*}
  R_1 = \GRPF(P^{(1)}) = P^{(1)}, \quad
  T  = 1, \quad
  \comp(M^{(1)}) = (2, 6, 6).
\end{equation*}
Thus, we have to use the branch in Lines~\ref{dtalg:qrdR1DD}--\ref{dtalg:qrdR1DD2} of Algorithm~\ref{dtalg}:
\begin{alignat}{2}
  h_l &= -1-\qi,&\quad h_r &= -1+\qi,\notag\\
  P_l &= (t^2+1)^2(t-\qi+1),&\quad P_r &= (t^2+1)^2(t+\qi+1),\label{eq:8}\\
  D_l &= \qj t^4 + 2 \qj t^2 - (\qi + \qj + \qk) t - 1 - \qi - \qj,&\quad D_r &= \qj t^4 + 2 \qj t^2 - (\qi + \qj + \qk) t + 1 - \qi - \qj.\label{eq:9}
\end{alignat}
Note on computation:
\begin{itemize}
\item We compute one quaternion root $h_l$ of $R_1$ by \eqref{eq:3}. We then have $R_1 = (t - h_l)(t - \cj{h_l})$ and use polynomial division to find $Q$ and $R$ with $D = Q(t-h_l)(t-\cj{h_l}) + R$.  The dual quaternion $h_r$ is the zero of the linear remainder polynomial $R$.
\item The polynomials $P_l$ and $P_r$ are also computed by polynomial division from
  \begin{equation*}
    P^{(1)} = P_r(t-h_r)\quad\text{and}\quad
    \cj{P^{(1)}} = \cj{P_l}(t-\cj{h_l}).
  \end{equation*}
  A similar computation yields $D_l$ and~$D_r$.
\end{itemize}
The updated values of $Q$, $L_l$ and $L_r$ are $Q = 1$, $L_l = [l_1] $, $L_r = [\;]$ where $l_1 = t+1+\qi$.

\itstep{Second iteration} The input to Algorithm~\ref{dtalg} is $M^{(2)} = P^{(2)} + \eps Q^{(2)}$ where $P^{(2)} = P_l$, $D^{(2)} = D_l$ are taken from \eqref{eq:8} and \eqref{eq:9}. We compute
\begin{equation*}
  R_1 = \GRPF(P^{(2)}) = (t^2+1)^2, \quad
  T  = t - \qi + 1, \quad
  \comp(M^{(2)}) = (0, 4, 5).
\end{equation*}
Because of $\gcd(R_1, D^{(2)}\cj{D^{(2)}}) = \gcd(R1, T\cj{T}) = 1$
and $T \neq 1$, we have to use the branch in Lines~\ref{dtalg:qrdTT1}--\ref{dtalg:qrdTT2} of
Algorithm~\ref{dtalg}. Using \eqref{eq:3} and polynomial division, we find
\begin{equation}
  \label{eq:10}
  \begin{gathered}
  P_1 = t^2 + 2t + 2,\quad
  h = -1 + \qi - \tfrac{39}{25}\eps\qj - \tfrac{2}{25}\eps\qk,\\
  M' = t^4
     - \tfrac{2}{25}\eps(7\qj+\qk)t^3
     + (2+\tfrac{12}{25}\qj\eps+\tfrac{16}{25}\eps\qk)t^2
     - \tfrac{8}{25}\eps(3\qj+4\qk)t
     + 1 - \eps(\qi+\tfrac{33}{25}\qj-\tfrac{31}{25}\qk).
  \end{gathered}
\end{equation}
The updated values of $Q$, $L_l$, and $L_r$ are $Q = 1$, $L_l = [l_1]$, $L_r = [t - h]$ where $r_3 = t + h$ and $h$ is as in \eqref{eq:10}.

\itstep{Third iteration} The input to Algorithm~\ref{dtalg} is $M^{(3)} = P^{(3)} + \eps Q^{(3)}$ where $M^{(3)} = M'$ is taken from \eqref{eq:10}. We compute
\begin{equation*}
  R_1 = \GRPF(P^{(3)}) = (t^2+1)^2, \quad
  T  = 1, \quad
  \comp(M^{(3)}) = (0, 4, 4).
\end{equation*}
Because of $\gcd(R_1, D^{(2)}\cj{D^{(2)}}) = 1$ and $T = 1$, we have to use the branch in Lines~\ref{dtalg:qrdP1Pp1}--\ref{dtalg:qrdP1Pp2} of Algorithm~\ref{dtalg}. Similar to the first iteration we compute
\begin{equation}
  \label{eq:11}
  \begin{gathered}
    P_1 = t^2+1,\quad P' = t^2+1,\quad h_l = \tfrac{3}{7}\qi + \tfrac{6}{7}\qj - \tfrac{2}{7}\qk,\quad
    h_r = -\qi,\\
    D' = (-\tfrac{14}{25}\qj-\tfrac{2}{25}\qk)t^3
    + (\tfrac{16}{35}-\tfrac{8}{35}\qi+\tfrac{104}{175}\qj-\tfrac{4}{25}\qk)t^2\\
    - (\tfrac{16}{35}-\tfrac{8}{35}\qi+\tfrac{188}{175}\qj+\tfrac{12}{25}\qk)t + \tfrac{24}{35} - \tfrac{67}{35}\qi-\tfrac{51}{175}\qj-\tfrac{43}{175}\qk.
  \end{gathered}
\end{equation}
The updated values of $Q$, $L_l$, and $L_r$ are $Q = t^2+1$, $L_l = [l_1,l_2]$, $L_r = [r_2, r_3]$ where
\begin{equation*}
  l_2 = t-\tfrac{3}{7}\qi-\tfrac{6}{7}\qj+\tfrac{2}{7}\qk,\quad
  r_2 = t+\qi.
\end{equation*}

\itstep{Fourth iteration} The input to Algorithm~\ref{dtalg} is $M^{(4)} = P^{(4)} + \eps D^{(4)}$ where $P^{(4)} = P'(t-\cj{h_l})(t-\cj{h_r})$ and $D^{(4)} = D'$ are taken from \eqref{eq:11}. We compute
\begin{equation*}
  R_1 = \GRPF(P^{(4)}) = t^2+1, \quad
  T = t^2 - (\tfrac{4}{7}\qi-\tfrac{6}{7}\qj+\tfrac{2}{7}\qk)t+\tfrac{3}{7}+\tfrac{2}{7}\qj+\tfrac{6}{7}\qk, \quad
  \comp(M^{(4)}) = (0, 2, 4).
\end{equation*}
Because of $\gcd(R_1, D^{(4)}\cj{D^{(4)}}) = 1$ and $\gcd(R_1, T\cj{T}) = t^2+1$, we have to use the branch in Lines~\ref{dtalg:qrdR1TT1}--\ref{dtalg:qrdR1TT2} of Algorithm~\ref{dtalg}. Similar to the first iteration we compute
\begin{equation}
  \label{eq:12}
  \begin{gathered}
    P_1 = t^2+1,\quad
    P' = t^2 - (\tfrac{4}{7}\qi - \tfrac{6}{7}\qj + \tfrac{2}{7}\qk)t + \tfrac{3}{7} + \tfrac{2}{7}\qj + \tfrac{6}{7}\qk,\\
    h_l = -\tfrac{158}{483}\qi - \tfrac{218}{483}\qj - \tfrac{401}{483}\qk\quad
    h_r = -\qk,\\
    D' = (-\tfrac{14}{25}\qj-\tfrac{2}{25}\qk)t^3
       + (\tfrac{4}{69}-\tfrac{56}{575}\qi+\tfrac{196}{345}\qj+\tfrac{8}{345}\qk)t^2\\
       - (\tfrac{28}{75}-\tfrac{44}{575}\qi+\tfrac{428}{345}\qj+\tfrac{2096}{1725}\qk)t
       - \tfrac{2308}{1725}-\tfrac{2069}{1725}\qi-\tfrac{613}{1725}\qj+\tfrac{553}{575}\qk.
  \end{gathered}
\end{equation}
The updated values of $Q$, $L_l$, and $L_r$ are $Q = (t^2+1)^2$, $L_l = [l_1,l_2,l_3]$, $L_r = [r_3, r_2, r_1]$ where
\begin{equation*}
  l_3 = t + \tfrac{158}{483}\qi + \tfrac{218}{483}\qj + \tfrac{401}{483}\qk,\quad
  r_1 = t+\qk.
\end{equation*}

\itstep{Fifth iteration} The input to Algorithm~\ref{dtalg} is $M^{(5)} = P^{(5)} + \eps D^{(5)}$ where $P^{(5)} = (t - \cj{h_l})P'(t - \cj{h_r})$ and $D^{(5)} = D'$ are taken from \eqref{eq:12}. Because of $R_1 = 1$, we have to use Line~\ref{dtalg:gfactor} of Algorithm~\ref{dtalg} and can compute a factorization of $M^{(5)}$ by means of Algorithm~\ref{galg}. Because of $M^{(5)}\cj{M^{(5)}} = (t^2+1)^4$, the factorization is unique. We find $M^{(5)} = f_1f_2f_3f_4$ where
\begin{gather*}
  f_1 = t - \tfrac{158}{483} \qi - \tfrac{218}{483} \qj - \tfrac{401}{483} \qk - \tfrac{29}{280} \eps\qi - \tfrac{37}{56} \qj\eps + \tfrac{2}{5} \eps\qk, \\
  f_2 = t + \tfrac{3}{7} \qi + \tfrac{6}{7} \qj - \tfrac{2}{7} \qk + \tfrac{43}{35} \eps\qi - \tfrac{48}{175} \qj\eps + \tfrac{51}{50} \eps\qk, \\
  f_3 = t - \qi - \tfrac{3}{2}\eps\qk,\quad
  f_4 = t - \qk - \tfrac{9}{8}\eps\qi + \tfrac{3}{8}\eps\qj.
\end{gather*}

Algorithm~\ref{dtalg} terminates and the polynomial $QM$ is the
product of the ten linear factors $l_1$, $l_2$, $l_3$, $f_1$, $f_2$, $f_3$, $f_4$, $r_1$, $r_2$, $r_3$.

\subsection{Degree bound of $Q$}
\label{sec:bound}
An upper bound on the degree of $Q$ as returned
by Algorithm~\ref{dtalg} can be read from the following theorem. 
 This degree bound is already know to be optimal.
 It is attained by certain planar motions \cite{planarf}.

\begin{thm}\label{theorem:degree}
  The degree of $Q$ as returned by Algorithm~\ref{dtalg} is less or equal to 
  the degree of the $\GRPF$ of the primal part of~$M$.
\end{thm}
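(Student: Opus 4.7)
The plan is to monitor two quantities simultaneously as Algorithm~\ref{dtalg} recurses: the running $\deg Q$ and the middle component $\be = \deg\gcd(P,\cj{P})$ of the complexity triple $\comp(M)$, which is exactly the degree of $\GRPF$ of the current primal part. I will show that the sum $\deg Q + \be$ is non-increasing across every recursive call. Since initially $\deg Q = 0$ and $\be = \deg\GRPF(P)$, and $\be \geq 0$ always, telescoping this invariant immediately yields $\deg Q \leq \deg\GRPF(P)$ at termination; termination itself is provided by Theorem~\ref{theorem:termination}.

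To verify the invariant I would do a branch-by-branch inspection of Algorithm~\ref{dtalg}. The only two branches that actually modify $Q$ are Lines~\ref{dtalg:qrdP1Pp1}--\ref{dtalg:qrdP1Pp2} and Lines~\ref{dtalg:qrdR1TT1}--\ref{dtalg:qrdR1TT2}: each replaces $Q$ by $QP_1$ for a quadratic real $P_1$ and transitions to an $M'$ with $\comp(M') = (0, \be - 2, \ga)$, so $\deg Q$ increases by $2$ while $\be$ drops by exactly $2$, preserving the sum. In all remaining branches (the $T \neq 1$ branch in Lines~\ref{dtalg:qrdTT1}--\ref{dtalg:qrdTT2}, the two sub-branches under $\al \geq 2$, and the terminal \texttt{GFactor} call on Line~\ref{dtalg:gfactor}) the polynomial $Q$ is left untouched while $\be$ either stays the same or decreases, so the invariant is again preserved.

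The main obstacle will be to justify the $\be$-decrease of exactly $2$ in the two $Q$-increasing branches, because the new primal part is no longer in $\R[t]$ but rather an expression such as $P'(t-\cj{h_l})(t-\cj{h_r})$ in Lines~\ref{dtalg:qrdP1Pp1}--\ref{dtalg:qrdP1Pp2}. My strategy is first to establish the multiplicativity $\GRPF(P'X) = P'\cdot\GRPF(X)$ for $P' \in \R[t]$, which follows by writing $\GRPF(A) = \gcd_{\R[t]}(A,\cj{A})$ and exploiting that real polynomials are central in $\H[t]$. The problem then reduces to $\GRPF((t-\cj{h_l})(t-\cj{h_r})) = 1$: by \eqref{eq:3} the scalar parts of the two roots of the irreducible $P_1 \in \R[t]$ coincide, so $\cj{h_l} + \cj{h_r}$ is real precisely when $h_l = \cj{h_r}$, a situation excluded by Lemma~\ref{fliplem}; hence $(t-\cj{h_l})(t-\cj{h_r})$ has no real factor. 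The verification for Lines~\ref{dtalg:qrdR1TT1}--\ref{dtalg:qrdR1TT2} is analogous, using $DP_1 = (t-h_l)D'(t-h_r)$ together with the non-fixed-point property of the map in Lemma~\ref{fliplem}. Once the invariant is in place, the bound $\deg Q \leq \deg\GRPF(P)$ drops out immediately.
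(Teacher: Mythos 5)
Your proof is correct and takes essentially the same approach as the paper's: both identify the two branches where $Q$ is enlarged (Lines~\ref{dtalg:hrhl2}--\ref{dtalg:hrhl2.5} and Lines~\ref{dtalg:hrhl4}--\ref{dtalg:hrhl4.5}) and observe that the increase of $\deg Q$ there is exactly matched by the drop in $\deg\GRPF$ of the primal part, so the quantity $\deg Q + \deg\GRPF$ never increases. Your version merely makes this monovariant explicit and fills in the supporting details (centrality of $\R[t]$ giving $\GRPF$-multiplicativity, and $h_l\neq\cj{h_r}$ via Lemma~\ref{fliplem}) that the paper leaves to the reader.
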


\begin{proof}
  The proof follows from a careful inspection of Algorithm~\ref{dtalg}.  
  The increase of the degree of $Q$ happen either in Lines~\ref{dtalg:hrhl2}--\ref{dtalg:hrhl2.5} 
  or lines \ref{dtalg:hrhl4}--\ref{dtalg:hrhl4.5}.  Furthermore, 
  the increase of the degree of $Q$ and the decrease of the degree of 
  the $\GRPF$ are equal at these places.
\end{proof}

We illustrate Theorem~\ref{theorem:degree} by one further example. 
 One achieves the upper bound of Theorem~\ref{theorem:degree}, the other does not.

\begin{example}
  The first example is the general Darboux motion considered in \cite{SL}. Let $M = \xi P - \qi \eta \eps P \in \D\H[t]$ with
  \begin{equation*}
    \xi = t^2 + 1,\quad
    \eta = \frac{5}{2} t - \frac{3}{4},\quad
    P = t-h\quad\text{and}\quad
    h=\frac{7}{9} \qi - \frac{4}{9} \qj + \frac{4}{9} \qk.
  \end{equation*}
  As seen in \cite{SL}, this give us the factorization $M = Q_1 Q_2 Q_3$, where
  \begin{equation*}
    \begin{aligned}
      Q_1 & = t - \frac{7}{9} \qi - \frac{4}{9} \qj + \frac{4}{9} \qk - \frac{5}{4} \eps \qi + \frac{43}{64} \eps \qj - \frac{97}{64} \eps \qk, \\
      Q_2 & = t + \frac{7}{9} \qi + \frac{4}{9} \qj - \frac{4}{9} \qk,                                                                          \\
      Q_3 & = t - \frac{7}{9} \qi + \frac{4}{9} \qj - \frac{4}{9} \qk
      - \frac{5}{4} \eps \qi - \frac{43}{64} \eps \qj + \frac{97}{64}
      \eps \qk.
    \end{aligned}
  \end{equation*}
  Here, no multiplication with a real polynomial is necessary.
\end{example}

\begin{example}
  The second example is the vertical Darboux motion which was avoided in \cite{SL}. Let $M = \xi P - \qi \eta \eps P \in \D\H[t]$ with
  \begin{equation*}
    \xi = t^2 + 1,\quad
    \eta = \frac{5}{2} t - \frac{3}{4},\quad
    P = t-\qi.
  \end{equation*}
  As seen in \cite{SL}, no factorization of the shape $M = Q_1 Q_2 Q_3$ with linear motion polynomials $Q_1$, $Q_2$, $Q_3$ exists. However, we can find a factorization by multiplying with a real polynomial whose degree equals the degree of $\xi$, the greatest real polynomial factor of the primal part of~$M$. We have $(t^2 + 1)M = Q_7 Q_6^2 Q_5 Q_4,$ where
  \begin{equation*}
    Q_7 = t-\qj-\frac{3}{4}\eps \qk, \quad
    Q_6 = t+\qj-\frac{5}{4}\eps\qi+\frac{3}{8}\eps\qk,\quad
    Q_5 = t-\qj,\quad
    Q_4 = P=t-\qi.
  \end{equation*}
\end{example}

\subsection{Factorizations in planar motion groups}

Algorithm~\ref{dtalg} can produce non-planar factorizations for planar motion polynomials. 
 This is an interesting feature but may not always be desirable. If one wishes to find a 
 factorization $(t-h_1)\cdots,(t-h_n)$ of a motion polynomial in a planar motion group, say 
 $\langle 1, \qi, \eps\qj, \eps\qk \rangle$, with rotation quaternions $h_1,\ldots,h_n$ in 
 that group, we have to pick suitable left and right factors $h_l$ and $h_r$ in Algorithm~\ref{dtalg}.

 

Note that for a planar motion in the subgroup $\langle 1,\qi,\eps\qj,\eps\qk\rangle$, 
 the primal part and the dual part of a motion have a certain
 commutativity property. If $P$ is a polynomial with coefficients in
 $\langle 1,\qi \rangle$ and $D$ is a polynomial with coefficients in
 $\langle \eps\qj, \eps\qk \rangle$, then $PD = D\cj{P}$, e.g., 
 $(t-\qi) \eps\qj=\eps\qj(t+\qi)$ or $(t-\qi) \eps\qk=\eps\qk(t+\qi)$.  This allows to 
 transform right factors into left factors and vice versa. Moreover, from Equation~\ref{eq:3} it 
 follows that there are exactly two roots of a real irreducible quadratic polynomial $Q$ 
 in the planar motion subgroup. We have, for example, 
 $Q = t^2 +1 = (t-\qi)(t+\qi) = (t+\qi)(t-\qi)$. Thus, whenever we compute a quaternion 
 root of a quadratic irreducible polynomial in Algorithm~\ref{dtalg}, we should select 
 a solution in the planar motion group and whenever we transfer a left factor $h_l$ 
 to a right factor $h_r$ we should do it in such a way that $h_r = \cj{h_l}$. This 
 ensures that Algorithm~\ref{dtalg} really returns a planar factorization.


\section{Acknowledgements} 
 The research was supported by the Austrian Science Fund (FWF):
W1214-N15, project DK9 and P\;26607.

\bibliographystyle{plain}
\bibliography{Factorization}

\end{document}